\def\BibTeX{{\rm B\kern-.05em{\sc i\kern-.025em b}\kern-.08em
    T\kern-.1667em\lower.7ex\hbox{E}\kern-.125emX}}
\newcommand{\nnp}{\ensuremath{\textsc{SPA}}}
\newcommand{\kse}{\ensuremath{\textsc{SP}}}
\newcommand{\mpp}{\ensuremath{\textsc{SPM}}}
\newcommand{\opp}{\ensuremath{\textsc{SPO}}}
\newcommand{\sbh}{\ensuremath{\textsc{SBP}_\textsc{H}}}
\newcommand{\sbe}{\ensuremath{\textsc{SBP}}}
\newcommand{\nne}{\ensuremath{\textsc{NNE}}}
\newcommand{\ope}{\ensuremath{\textsc{DPE}}}
\newcommand{\leastcomp}{\ensuremath{\textsc{LCMC}}}
\newcommand{\leastcompd}{\ensuremath{\textsc{LCMD}}}
\newcommand{\random}{\ensuremath{\textsc{Random}}}
\newcommand{\TFSN}{\textsc{TFSN}}
\newcommand{\TFSNC}{\textsc{TFSNC}}
\newcommand{\sld}{\textit{Slashdot}}
\newcommand{\wiki}{\textit{Wikipedia}}
\newcommand{\epin}{\textit{Epinions}}
\definecolor{darkgreen}{rgb}{0.125,0.5,0.169}
\def\bp{\textsc{{sbp}}}
\def\ne{\textsc{{nne}}}
\def\pe{\textsc{{dpe}}}
\def\spa{\textsc{{spa}}}
\def\spm{\textsc{{spm}}}
\def\spo{\textsc{{spo}}}
\newtheorem{claim}{Claim}
\begin{document}
\title{Forming Compatible Teams in Signed Networks}

\author{Ioannis Kouvatis{\small $~^{1}$}, Konstantinos Semertzidis{\small $~^{2}$} \and Maria Zerva{\small $~^{1}$}, Evaggelia Pitoura{\small $~^{1}$}, Tsaparas Panayiotis{\small $~^{1}$}
\vspace{0.1cm}
}

\affiliation{%
  \institution{{\small $~^{1}$} Department of Computer Science and Engineering, University of Ioannina, Greece}
}
\email{{ikouvatis,mzerva,pitoura,tsap}@cse.uoi.gr}
\affiliation{%
  \institution{\vspace{0.1cm}{\small $~^{2}$} IBM Research, Dublin, Ireland}
}
\email{konstantinos.semertzidis1@ibm.com}

\renewcommand{\shortauthors}{}

\begin{abstract}
The problem of team formation in a social network asks for a set of individuals who not only have the required skills to perform a task but who can also communicate effectively with each other.
Existing work assumes that all links in a social network are positive, that is, they indicate friendship or collaboration between individuals. 
However, it is often the case that the network is \emph{signed}, that is, it contains both positive and negative links, corresponding to friend and foe relationships.
Building on the concept of structural balance, we provide definitions of compatibility between pairs of users in a signed network, and algorithms for computing it.
We then define the team formation problem in signed networks, where we ask for a \emph{compatible} team of individuals that can perform a task with small communication cost.
We show that the problem is NP-hard even  when there are no communication cost constraints, and we provide heuristic algorithms for solving it. We present experimental results to investigate the properties of the different compatibility definitions, and the effectiveness of our algorithms.
\end{abstract}

\maketitle

\vspace*{-0.1in}
\section{Introduction}
Given a task that requires a set of skills,  a pool of workers who possess some of the skills and are organized in a social network, team formation refers to finding a subset of the workers
that collectively cover the skills and can communicate effectively with each other \cite{Lappas09}. 
The communication cost of the team is measured using the distances between the team members in the network.
The idea is that socially well connected users will be more effective in working together.

 Since the pioneering work in~\cite{Lappas09}, there has been considerable research activity on the problem~\cite{Kargar11,5590958,Anagnostopoulos:2012:OTF:2187836.2187950}. 
All existing work 
assumes that the social network contains only positive ties between individuals. That is, the edges denote friendship or successful collaboration between two users.
However, very often, we have \emph{signed} networks with both positive and negative ties. A negative edge indicates a contentious relationship and inability of two users to collaborate and thus they should not be in the same team.

In this paper, we study the problem of team formation in signed networks. In addition to the known requirements of the  team formation problem, we ask that the team contains users that are all \emph{compatible} with each other.  Defining compatibility in a signed network is an interesting problem in itself. Clearly, users connected with a positive edge are compatible, while users connected with a negative edge are incompatible. We infer the compatibility of non-connected pairs of users by using the structure of the graph, and the principle of \emph{structural balance}. 
Structural balance \cite{davis1963structural} is based on the premise that
``the friend of my friend is my friend'', ``the enemy of my enemy is my friend'', and ``the enemy of my friend is my enemy''. Using this premise, we determine the compatibility of two users by looking at the paths that connect them. For example,  a path of two positive or two negative edges indicates compatibility, while  a  path of one positive and one negative edge indicates incompatibility. We formalize this idea, and  we provide definitions of compatibility of varying strictness.
We perform experiments with four real datasets, where we evaluate the different compatibility definitions, and the performance of the algorithms for the team formation problem.

\vspace*{-0.05in}
\section{Problem Definition}
\label{sec:problem}

We are given as input a pool of $n$ individuals organized in an \emph{undirected signed} graph $G=(V, E)$. Each node in $V$ corresponds to an individual, and $E = \{(u,v,\ell): u,v\in V, \ell \in\{+1,-1\}\}$ is a set of edges \emph{labeled} as either positive or negative to indicate that $u$ and $v$ are friends or enemies respectively. We assume that $G$ is connected.
We will use a function $sign: E \rightarrow \{+1,-1\}$ that returns the label of each edge in $E$.

We are also given as input a universe $S$ of skills.
Each individual $u$ in $V$ possesses a set of skills, $skill(u) \subseteq S$.
We define a task as the subset of skills $T$ $\subseteq$ $S$  required for its completion.
Given a task, the team formation problem asks for a team $X$ of individuals, $X \subseteq V$, that collectively covers the required skills and whose members can
work together \emph{effectively}.
%
The effectiveness of a team is typically quantified by the communication cost, $Cost(X)$, of the team defined as some function of the distances of its members in the graph. 

However, when the graph contains both positive and negative edges, we need to take into account that some individuals, although close in the graph, may not be \emph{compatible} with each other.
%
%
To capture whether two users are compatible, we introduce a relation  $Comp$ $\subseteq$ $V \times V$
such that $(u, v)$ $\in$ $Comp$, if and only if, $u$ and $v$ can work together.
Two natural requirements for $Comp$ are (1) reflexivity: $(u, u)$ $\in$ $Comp$, and (2) symmetry: if $(u, v)$ $\in$ $Comp$, then $(v, u)$ $\in$ $Comp$.
Furthermore, we require that the $Comp$ relation satisfies the following two intuitive properties:
\begin{enumerate}
\item \emph{Positive Edge Compatibility:} For all $(u,v) \in E$, such that $sign(u,v) = +1$, $(u,v) \in Comp$. 
\item
\emph{Negative Edge Incompatibility:} For all $(u,v) \in E$, such that $sign(u,v) = -1$, $(u,v) \not \in Comp$. 
\end{enumerate}

We will provide various definitions of compatibility in Section~\ref{sec:compatibility}. We now define formally our problem.
\begin{definition} \textsc{(Team Formation in Signed Networks (TFSN))} \,
	Given a signed graph $G=(V, E)$, a compatibility relation $Comp$, and a task $T$, find $X$  $\subseteq$ $V$ such that
	(1)  $\bigcup_{u \in X}skill(u)$ $\supseteq$ $T$,
	(2) for each $u,v\in X$, $(u, v)$ $\in$ $Comp$,
	and (3)  $Cost(X)$ is minimized.
\end{definition}
\vspace*{-0.03in}
The {\TFSN} problem contains as a special case the original team formation problem which is NP-hard \cite{Lappas09}, thus {\TFSN} is also NP-hard.
Moreover, we have shown that just finding a set of compatible users is NP-hard. Let {\TFSNC} denote the simplified version of the {\TFSN} problem, where we drop the third requirement of minimizing the cost. In particular, we have proven the following theorem.
\vspace*{-0.05in}
\begin{theorem}
The decision version of {\TFSNC} is NP-hard for any compatibility relation that satisfies positive edge compatibility and negative edge incompatibility.
\end{theorem}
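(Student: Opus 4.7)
The plan is to prove NP-hardness by a reduction from \textsc{3-Sat}. Starting from a 3-CNF formula $\phi$ with variables $x_1,\ldots,x_n$ and clauses $C_1,\ldots,C_m$, I would construct a signed graph $G$ with two nodes $u_i^+$ and $u_i^-$ per variable (encoding the two truth values of $x_i$), place a \emph{negative} edge between $u_i^+$ and $u_i^-$, and place a \emph{positive} edge between every pair of nodes belonging to \emph{different} variables. In this way the underlying graph of $G$ is complete, so every pair carries an explicit sign; by positive edge compatibility and negative edge incompatibility the value of $Comp$ is then forced on every relevant pair, regardless of how $Comp$ is defined on other pairs of nodes.

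For the skill encoding I would take a skill universe $S=\{r_1,\ldots,r_n\}\cup\{s_1,\ldots,s_m\}$, assign $r_i$ to both $u_i^+$ and $u_i^-$, and assign $s_j$ to the node of each literal appearing in the clause $C_j$. Setting the task to $T=S$, the correctness argument then proceeds in two directions. Any compatible team $X$ must, for each variable $x_i$, contain exactly one of $u_i^+, u_i^-$: at most one, because the negative edge makes them incompatible, and at least one, because $r_i$ has to be covered. Hence $X$ encodes a truth assignment $\tau$; covering $s_j$ amounts to including some literal of $C_j$ that $\tau$ sets to true, so $\tau$ satisfies every clause. Conversely, the literals chosen by any satisfying assignment form a team that is compatible (all inter-variable edges are positive) and that covers $T$ by construction.

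The main obstacle is the universal quantifier ``for any compatibility relation that satisfies positive edge compatibility and negative edge incompatibility'': the two axioms only constrain $Comp$ on signed pairs, leaving an adversary free to set compatibility arbitrarily on non-adjacent pairs. The trick of making the underlying graph of $G$ complete is what neutralises this freedom, because every pair I reason about is already labelled and so its compatibility value is fixed by one of the two axioms. With this in place, the forward and backward implications above are routine, the reduction is polynomial in $|\phi|$, and NP-hardness of \textsc{TFSNC} follows from that of \textsc{3-Sat}.
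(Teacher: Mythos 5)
Your reduction is correct, and it shares the one idea that really matters here with the paper's proof: to handle the universal quantifier over compatibility relations, you make the constructed signed graph \emph{complete}, so that every pair of nodes you reason about carries an explicit sign and its membership in $Comp$ is forced by the positive-edge-compatibility and negative-edge-incompatibility axioms alone. Where you differ is the source problem and the gadget. The paper reduces from $k$-\textsc{Clique}: it builds $k$ copies of the vertex set of the clique instance, puts negative edges inside each copy and between corresponding nodes across copies, and uses positive edges across copies exactly where the original graph has an edge; one skill per copy then forces the team to pick $k$ pairwise-adjacent, pairwise-distinct original vertices. You reduce from \textsc{3-Sat} with a two-node variable gadget joined by a negative edge, positive edges everywhere else, a skill $r_i$ forcing exactly one literal node per variable, and clause skills forcing satisfaction. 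Your construction is smaller ($2n$ nodes versus $k|V_C|$) and arguably more elementary; the paper's choice of $k$-\textsc{Clique} has the side benefit of making explicit the connection between compatible teams and cliques/independent sets (a compatible team is an independent set of the negative-edge graph), which is what motivates the paper's remark that one should not expect good approximation algorithms either. Both arguments are polynomial and both establish the theorem; your correctness argument (at most one node per variable by the negative edge, at least one by the skill $r_i$, clause coverage equals satisfaction) is complete as stated.
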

\section{User Compatibility}
\label{sec:compatibility}
We start with two basic definitions of compatibility.

\vspace*{-0.05in}
\begin{definition} Direct Positive Edge ($\ope$) compatibility:  
$Comp_{\pe} =\{(u,v)\subseteq V\times V: (u,v,+1) \in E\}$.
\end{definition}

\vspace*{-0.1in}
\begin{definition} No Negative Edge ($\nne$) compatibility:
$Comp_{\ne} =\{(u,v) \subseteq V \times V: (u,v,-1) \not\in E\}$.
\end{definition}

\vspace*{-0.05in}
$\ope$  is the strictest form of compatibility, while  $\nne$ is the most relaxed one. Specifically, $Comp_{\pe}$ is the minimal subset of pairs of nodes that satisfies the positive edge compatibility property, while $Comp_{\ne}$ is the maximal subset of pairs of nodes that satisfies the negative edge incompatibility property.

We will now use the theory of \emph{structural balance}
\cite{cartwright1956structural,davis1963structural,book1}
to provide more refined definitions of compatibility.  
The theory is based on the following 
socially and psychologically founded premises:  (1)  the friend of my friend is my friend, (2) the friend of my enemy is my enemy, and (3) the enemy of my enemy is my friend.
Let $P = (v_0, $\dots$, v_{k+1})$,  $(v_{i},v_{i+1}) \in E$ denote a path between nodes $v_0$ and $v_{k+1}$ in a signed graph $G$. 
We define the sign of the path as $sign(P) = \prod_{i = 0...k} sign(v_i,v_{i+1})$. We say that path $P$ is positive if $sign(P) = +1$ and negative if $sign(P) = -1$.

\begin{claim}
A positive path $P_{uv}$ between two nodes $u$ and $v$ indicates compatibility, while a negative one  indicates incompatibility.
\end{claim}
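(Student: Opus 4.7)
The plan is to prove the claim by induction on the length of the path $P_{uv} = (v_0, v_1, \dots, v_{k+1})$ with $v_0 = u$ and $v_{k+1} = v$, using the three structural-balance premises as the inductive rule. I would maintain along the traversal a partition of the visited nodes into $F_u$ (friends of $u$) and $E_u$ (enemies of $u$), initializing $v_0 \in F_u$ by reflexivity. For the inductive step, when I traverse the edge $(v_i, v_{i+1})$, I place $v_{i+1}$ in the same class as $v_i$ if $sign(v_i, v_{i+1}) = +1$ (justified by ``the friend of my friend is my friend'' and its dual ``the friend of my enemy is my enemy''), and in the opposite class if $sign(v_i, v_{i+1}) = -1$ (justified by ``the enemy of my enemy is my friend'' and ``the enemy of my friend is my enemy'').

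Next, I would show by a straightforward induction that the class of $v_{i+1}$ relative to $u$ equals the sign of the prefix path from $v_0$ to $v_{i+1}$, where $+1$ corresponds to $F_u$ and $-1$ corresponds to $E_u$. The base case $i=0$ is immediate. The inductive step uses the multiplicative definition $sign(P_{0 \to i+1}) = sign(P_{0 \to i}) \cdot sign(v_i, v_{i+1})$ together with the flip-or-preserve rule above: a positive edge preserves both the class and the sign factor, while a negative edge flips both. Applied to the whole path, this yields $v \in F_u$ exactly when $sign(P_{uv}) = +1$ and $v \in E_u$ exactly when $sign(P_{uv}) = -1$, which is precisely the claimed correspondence between path sign and compatibility.

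As an independent justification, I would invoke the global characterization of structural balance: a signed graph is balanced if and only if it contains no cycle with an odd number of negative edges. Since a simple path $P_{uv}$ is acyclic, it can be completed to a structurally balanced complete graph by adding the missing signed edges. In every such completion, the edge $(u,v)$ must carry the same sign as $sign(P_{uv})$, since any other choice would produce a cycle (the path together with the added edge) with an odd number of negative edges, contradicting balance. A positive completed edge $(u,v)$ encodes compatibility and a negative one encodes incompatibility.

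The main obstacle is conceptual rather than technical: the statement ``indicates compatibility'' is an informal reading of structural balance, not a crisp mathematical predicate. The subtlety is that two distinct paths between $u$ and $v$ may carry opposite signs, producing conflicting indications; this is exactly the tension that motivates the refined, aggregate compatibility definitions introduced later in Section~\ref{sec:compatibility}. I would therefore present the claim as a \emph{per-path} inference rule derived from the three premises, and postpone the reconciliation of conflicting paths to the subsequent formal definitions of compatibility.
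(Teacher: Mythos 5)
Your proposal is correct and follows essentially the same route as the paper: the paper's justification is exactly the friend/enemy traversal argument (place $u$ in $F_u$, keep the class on a positive edge, flip it on a negative edge, conclude that $v$ lands in $F_u$ iff the path sign is $+1$), and your induction simply makes that traversal precise. Your secondary argument via completing the acyclic path to a structurally balanced graph, and your caveat about conflicting signs of distinct paths being resolved by the later aggregate definitions, are both consistent with (and anticipated by) the paper's treatment.
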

The claim follows from the basic principle of structural balance. To see this,
let $P_{uv} = (x_0,x_1,...,x_k,x_{k+1})$, $x_0 = u$, $x_{k+1} = v$ be a path that connects $u$ and $v$. Let $F_u$ be the set of  friends of $u$ and $E_u$ be the set of  enemies of $u$.
We start by placing node $u$ in $F_u$ and
traverse the path as follows. When we traverse edge $(x_i,x_{i+1})$, if the edge is positive we place $x_{i+1}$ in the same set as $x_i$.
That is, the friends of my friends are also my friends, and the friends of my enemies are my enemies.
If the edge $(x_i,x_{i+1})$ is negative then we place $x_{i+1}$ in the opposite set of $x_i$.
That is, the enemies of my enemies are my friends, and the enemies of my friends are my enemies.
If the path $P_{uv}$ is positive then $v$ will be placed in $F_u$, while if the path is negative it will be placed in $E_u$.

We first look at shortest paths.
We use $SP_{uv}$ to denote the set of shortest paths between nodes $u,v$, $SP_{uv}^+$ to denote the positive, and  $SP_{uv}^-$ the negative ones.


\vspace*{-0.05in}
\begin{definition}
Shortest Path (\kse) compatibility relations:
-- All Shortest Path ({\nnp}) compatibility: $Comp_{\spa} = \{(u,v) \subseteq V\times V: \forall P_{uv}\in SP_{uv}, sign(P_{uv}) = +1\}$. \\
-- Majority Shortest Path ({\mpp}) compatibility: $Comp_{\spm} = \{(u,v) \subseteq V\times V: |SP_{uv}^+| \geq |SP_{uv}^-|\}$. \\
--  One Shortest Path ({\opp}) compatibility: $Comp_{\spo} = \{(u,v) \subseteq V\times V: \exists P_{uv}\in SP_{uv}, sign(P_{uv}) = +1\}$.
\end{definition}

\vspace*{-0.02in}
We further relax compatibility by asking for positive paths that are not necessarily the shortest ones.
Based on structural balance, certain triangles are more stable.
A general signed graph is structurally balanced, 
if it does not contain any cycle with an odd number of negative edges~\cite{book1}.

Given a path $P$, let $G_P = (P, E[P])$ be the graph induced by the nodes of $P$.
We say that path $P$ is structurally balanced if the subgraph $G_P$ is structurally balanced.
Let $BP_{uv}$ denote the set of all structurally balanced paths between $u$ and $v$.

\vspace*{-0.05in}
\begin{definition} Structurally Balanced Path ({\sbe}) compatibility: 
$Comp_{\bp} = \{(u,v) \subseteq V\times V: \exists P_{uv}\in BP_{uv}, sign(P_{uv}) = +1\}$.
\end{definition}
\vspace*{-0.05in}

The motivation  for {\sbe} compatibility is that,
in addition to $P_{uv}$ being positive, asking for $G_P$ to be structurally balanced means that
the sign of any edge connecting $u$ and $v$ must be positive, otherwise a cycle with an odd number of negative edges will be created.
Note that  {\sbe}-compatibility does not imply {\kse}-compatibility. Consider the example in Figure~\ref{fig:counterexample}(a). The (only) shortest path between $u$ and $v$ is $(u,x_1,v)$ which is negative, and thus $u,v$ are not {\kse}-compatible. However, $u$ and $v$ are {\sbe}-compatible, since the path $(u,x_2,x_3,x_4,v)$ is positive and structurally balanced.
Note that there is a shorter path $(u,x_2,x_1,v)$ between $u$ and $v$ that is positive, but  not structurally balanced, since the shortcut edge $(u,x_1)$ creates the unbalanced triangle $(u,x_1,x_2)$.

%

It is easy to see that the following holds:
\begin{proposition}
$Comp_{\pe}$	$\subseteq$  $Comp_{\spa}$ $\subseteq$  $Comp_{\spm}$ $\subseteq$ $Comp_{\spo}$ $\subseteq$ $Comp_{\bp}$ $\subseteq$  $Comp_{\ne}$.
\end{proposition}\vspace{-0.1in}

\begin{algorithm}[t]
	\caption{The {\kse}-compatibility  algorithm.}
	\label{algo:spc}
	\begin{algorithmic}[1]
	 \small 
		\Statex{{\bf Input:} Signed graph $G$, query node $q$.}
		\Statex{{\bf Output:} The number of positive and negative shortest paths from $q$ to all other nodes in the graph.}
		\vspace{0.1cm}
		\hrule
		\vspace{0.1cm}
		\State Initialize $N^+(q)$ = 1, $N^-(q)$ = 0 $N^+(x)$ = $N^-(x)$ = 0, $L(q) = 0$, $L(x) = \infty$, empty queue $Q$.
		\State $Q$.enqueue($q$)
		\While {$Q \neq 0$}
			\State $u$ = $Q$.dequeue()
			\For{$x$ adjacent to $u$}
				\If {$L(u)+1 \leq L(x)$} 
                \label{line:if}
					\If {$x \notin Q$} \State Q.enqueue($x$)
					\EndIf
					\State $L(x) = L(u) + 1$
					\If {$sign(u,x) = +1$}
						\State $N^+(x)$ += $N^+(u)$;  $N^-(x)$ += $N^-(u)$
					\ElsIf {$sign(u,x) = -1$}
						\State $N^-(x)$ += $N^+(u)$; $N^+(x)$ += $N^-(u)$
					\EndIf
				\EndIf
			\EndFor
		\EndWhile
		\State \textbf{return} $(N^+,~N^-,~L)$			
	\end{algorithmic}
\end{algorithm}

\paragraph{Algorithms.}
We  now present algorithms for {\kse} and {\sbe} compatibility.
%
Algorithm~\ref{algo:spc} shows the modified BFS algorithm for counting positive and negative shortest paths.
Given the query node $q$, for each node $x \in V$ in the graph, the algorithm maintains the numbers $N^+(x)$ and $N^-(x)$ of positive and negative shortest paths respectively and the length of the shortest path $L(x)$ from $q$ to $x$. When reaching node $x$ from node $u$ through a shortest path (line~\ref{line:if}), if the edge $(u,x)$ is positive, we increment the number of positive and negative paths of $x$ by $N^+(u)$ and $N^-(u)$ respectively, since all paths retain their sign. If the edge $(u,x)$ is negative, we increment $N^-(x)$ by $N^+(u)$, and $N^+(x)$ by $N^-(u)$, since the sign of the paths change. Each edge is examined only once. 


The efficient enumeration of shortest paths is possible due to the  prefix property that 
 a shortest path between $q$ and $x$ that goes through node $u$ must use a shortest path from $q$ to $u$. However, this is not the case for shortest structurally balanced paths.
 Consider the example in Figure~\ref{fig:counterexample}(b). The shortest structurally balanced path from $u$ to $x_4$ is $(u,x_3,x_4)$. However, the shortest structurally balanced path  $(u,x_1,x_2,x_4,x_5,v)$ from $u$ to $v$ goes through node $x_4$ but not through the shortest structurally balanced path from $u$ to $x_4$, since the path $(u,x_3,x_4,x_5,v)$ is not structurally balanced.

\begin{figure}
	\centering
		\vspace*{-0.1in}
		\resizebox{0.45\textwidth}{!}{
	\subfloat[][]{\includegraphics[clip, trim=2.7cm 11cm 23.5cm 3.6cm]{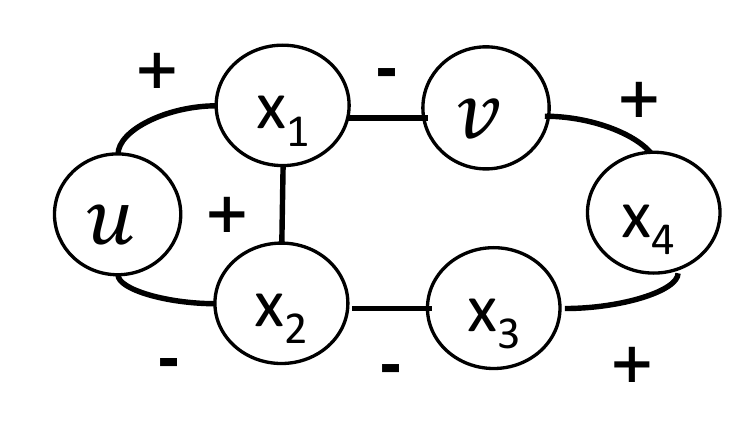}}
		\subfloat[][]{\includegraphics[clip, trim=1.9cm 11cm 19.1cm 3.6cm]{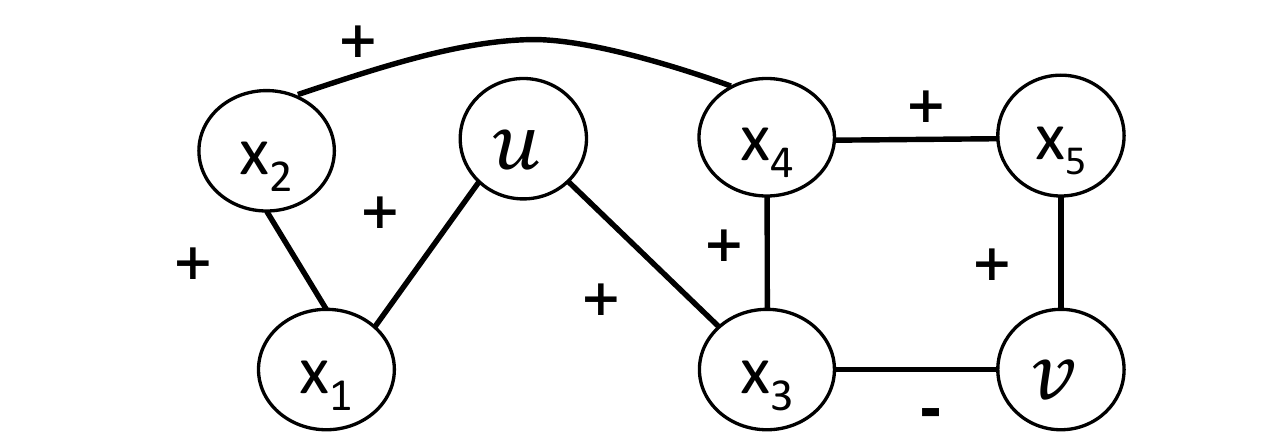}}}
	\vspace*{-0.1in}
	\caption{(a) $u$ and $v$ are {\sbe} but not {\kse} compatible. (b) It does not suffice to keep  a single path from $u$ to $x_4$.
	}	\vspace*{-0.1in}
	\label{fig:counterexample}
\end{figure}

Since the exact algorithm is prohibitively expensive for large graphs, due to the exponential number of paths,  we also consider a heuristic alternative for {\sbe}-compatibility that counts only paths having the prefix property. 
We will use {\sbe} to denote the compatibility relation computed by the exact exhaustive algorithm, and {\sbh} the output of the heuristic algorithm.

\section{Team Formation}
\label{sec:teamform}

We now present algorithms for the {\TFSN} problem. Recall that our goal is to find a team of compatible users, that covers all skills, and minimizes the communication cost. The communication cost is defined as the largest distance between any two pairs of users in the team. We define the distance between two users looking at the positive paths connecting them.
Specifically, for {\ope} and {\kse} compatibility, distance is the length of the shortest path, while  for {\sbe} the length of the  shortest structurally balanced positive path. For {\nne} compatibility, since there be no positive paths, we define distance as the length of the shortest path ignoring its sign.

Algorithm~\ref{algo:tf} is a generic algorithm
that incrementally builds a solution, each time considering an uncovered skill and adding a compatible user having this skill, until all skills are covered.
There are two placeholders in this algorithm. 
The first is the policy for selecting a skill (lines \ref{line:ss1} and \ref{line:ss2}), and the second the policy for selecting a candidate user (line \ref{line:su}).


\begin{algorithm}[t]
	\caption{Team formation algorithm.}
	\label{algo:tf}
	\begin{algorithmic}[1]
	 \small
		\Statex{{\bf Input:} Signed graph $G$, task $T$, compatibility relation $Comp$.}
		\Statex{{\bf Output:} Team $X$.}
		\vspace{0.04cm}
		\hrule
		\vspace{0.1cm}
		\State Initialize $S \leftarrow \emptyset$.
		 \textit{\,\,\,\,//$S$: skills covered so far}
		 	\State Initialize $\mathcal{L} \leftarrow \emptyset$.
		 \textit{\,\,\,\,//$\mathcal{L}$: set of candidate teams}
		\State Select skill $s$ from $T$ \textit{\,\,\,\,//skill selection} \label{line:ss1}
		\For{each $u$ with skill  $s$}
		\State $X$ $\leftarrow$  $\{u\}$  \textit{\,\,\,\,// $X$: candidate team}
		\State $S$ $\leftarrow$ $S$ $\cup$ $(T \cap skills(u))$
		\While{$S$ $\neq$ $T$}
			\State Select skill $s$ from $T - S$ \textit{\,\,\,\,//skill selection} \label{line:ss2}
			\State Select user $v$ with skill $s$ \textit{\,\,\,\,//user selection} \label{line:su}
			\State \,\,\,\,\,\, s.t. $(v,x)\in Comp$ for all $x$ in $X$
			\State $X$ $\leftarrow$ $X$ $\cup$ $\{v\}$
			\State $S$ $\leftarrow$ $S$ $\cup$ $(T \cap skills(v))$
		\EndWhile
		\State $\mathcal{L}$ $\leftarrow$ $\mathcal{L}$ $\cup$  $X$
		\EndFor
		\State \textbf{return} $argmin_{X \in \mathcal{L}} (Cost(X))$				
	\end{algorithmic}
\end{algorithm}

We consider two policies for selecting skills: select the rarest skill first (as in~\cite{Lappas09}), and select the least compatible skill first.
We define the compatibility degree $cd(s)$ of  skill $s$ based on
the compatibility between the users with
skill $s$ and the users with all other skills:
$cd(s)$ = $\sum_{s_j \in S, s_i\neq s}cd(s, s_j)$, where 
$cd(s, s_j)$ = $|\{(u_i, u_j)$ $:$ $(u_i, u_j)$ $\in$ $Comp$, $s \in skills(u_i)$ and $s_j \in skills(u_j)\}|$.

\begin{table}[!ht]
	\centering
	\vspace*{-0.05in}
	\caption{\label{tab:dataset-statistics}Dataset Statistics}
	\vspace*{-0.09in}
	\resizebox{0.77\columnwidth}{!}{
		\begin{tabular}{lcccc}
			\hline
			& {\sld} & {\epin} & {\wiki} \\
			\hline 
			\#users & 214 & 28,854  & 7,066 \\
			\#edges & 304 & 208,778 & 100,790\\
			\#neg edges & 89 (29.2\%) & 34,941  (16.7\%)  & 21,765  (21.5\%)\\
			diameter & 9 & 11 & 7\\
			\#skills & 1,024 & 523 & 500 \\ \hline
	\end{tabular}}
\end{table}
\begin{table}[ht!]
	\centering
	\vspace*{-0.1in}
	\caption{Comparison of compatibility relations}
	\vspace*{-0.1in}
	\label{table:percentage_comp}
	\resizebox{0.8\columnwidth}{!}{
		\begin{tabular}{ccccccc}\hline
			& {\nnp} & {\mpp} & {\opp} & {\sbh} & {\sbe} & {\nne} \\\hline
			\multicolumn{1}{l}{} & \multicolumn{6}{c}{{\sld}}     \\\hline
			comp. users & 44.72 & 55.72 & 72.45 & 97.85 & 99.38 & 99.64\\
			comp. skills & 80.57 & 86.19 & 92.63 & 99.11 & 99.47 & 99.50\\
			avg distance & 4.13 & 4.37 & 4.57 & 4.95 & 4.97 & 4.53\\\hline
			\multicolumn{1}{l}{} & \multicolumn{6}{c}{{\epin}}     \\\hline
			comp. users & 29.61 & 62.98 & 86.46 & 99.82 & -- & 99.99 \\
			comp. skills & 97.25 & 98.90 & 99.66 & 99.99 & -- & 99.99\\
			avg distance & 3.48 & 3.82 & 3.87 & 3.97 & -- & 3.83\\\hline
			\multicolumn{1}{l}{} & \multicolumn{6}{c}{{\wiki}}    \\\hline
			comp. users & 21.98 & 59.33 & 87.51 & 99.56 & -- & 99.91\\
			comp. skills & 66.17 & 87.31 & 97.32 & 99.87 & -- & 99.96  \\
			avg distance & 2.85 & 3.23 & 3.30 & 3.38 & -- & 3.25\\\hline
	\end{tabular}}
\vspace*{-0.1in}
\end{table}

\begin{table}[ht!]
	\centering
	\caption{Comparison with unsigned team formation}
	\vspace*{-0.1in}
	\label{table:baseline}
	\resizebox{0.73\columnwidth}{!}{
		\begin{tabular}{c c c c c  c}\hline
			& {\nnp} & {\mpp} & {\opp}  & {\sbe} & {\nne} \\ \hline
			Ignore sign & 0\% & 2\% & 2\% & 26\% & 30\% \\
			Delete negative & 0\% & 2\% & 18\% & 66\% & 76\% \\ \hline
	\end{tabular}}
	\vspace*{-0.15in}
\end{table}

We also consider two policies for selecting users: select the user that has the minimum distance, and select the user that is most compatible among the remaining users.
The first selection aims at minimizing the cost, while the second at maximizing the chances of finding a group of compatible users.
We experimentally evaluate different combinations of these policies in Section~\ref{sec:experiments}.


\section{Experimental Evaluation}
\label{sec:experiments}

In this section, we compare the different compatibility relations on real datasets and evaluate the team formation algorithm.
%

\noindent{\textbf{Datasets.} Table \ref{tab:dataset-statistics} details our real-world datasets. 
{\sld} contains information about users and their posts on Slashdot. 
We obtained a network of Slashdot users  \cite{social-status}, where users have tagged their relationships as friend or foe. Then we used the categories of users' posts as skills.
{\epin} contains information about users and their reviews about products.
The dataset is created by combining a signed network of Epinions users \cite{social-status} with 
the RED  \footnote{https://projet.liris.cnrs.fr/red/} dataset  which contains information about the products and product categories the users have reviewed.
We used the unique user ids to match users in the two datasets, and we assigned as skills to users the categories of the products they have reviewed.
{\wiki} \cite{social-status} is a signed network of editors. The  edge sign corresponds to a positive or negative vote in admin elections.
Since there was no skill information, we assigned synthetically generated skills to its users.
We generated 500 distinct skills with frequencies following a Zipf distribution as in real data. Each skill is assigned to users in the network uniformly at random.



\noindent\textbf{Compatibility Relations}.
In Table~\ref{table:percentage_comp}, we report the percentage of compatible pairs of users  
and skills. 
The {\ope} is excluded from our analysis, since this corresponds to finding cliques and team formation is too restrictive.
Two skills $s_1$ and $s_2$ are compatible if they have compatibility degree $cd(s_1, s_2) > 0$, i.e., there is at least one compatible pair of users $(u,v)$ such that $u$ has $s_1$ and $v$ has
$s_2$ (including self-compatibility, if the same user has both skills).

As expected, the number of compatible user and skill pairs increases as we relax the notion of compatibility.
For {\nnp}, less than half of the pairs of nodes are compatible, and as low as 21.98\% for the case of {\wiki}. Also, in most cases, for a sizeable fraction of pairs of skills there are no compatible users, indicating that for many skill combinations there can be no compatible team.
Another interesting observation is that the fraction of compatible pairs for {\sbe}  is comparable with that for {\nne}. 
This means that, for all pairs that are not directly connected with a negative edge, there exists at least one positive structurally balanced path that connects them.

\noindent\textbf{\textit{Distance.}} In Table \ref{table:percentage_comp}, we also report the average distance between compatible users.
The distance steadily increases as we relax the compatibility definition.
The exception is {\nne}
in which we allow negative paths, and thus we are able to discover shorter ones.

\noindent\textbf{\textit{Comparison of {\sbe} and {\sbh}.}}
We also compare  the exact (\sbe) and heuristic (\sbh) structurally balanced compatibility
for the {\sld} dataset, for which we can compute the exact relation.
Table \ref{table:percentage_comp} shows the difference between {\sbe} and {\sbh} which is only {\raise.17ex\hbox{$\scriptstyle\sim$}}2.5\%

\noindent\textbf{Team Formation.}
Due to space limitation, in this set of experiments, we only report results using {\epin}. Results are similar for the other  networks. We generate tasks of different sizes. 
For a given task of size $k$, we generated 50 tasks by randomly selecting $k$ skills.
First, we compare the four different team formation algorithms obtained by combining the two different skill and user selection policies.
We report results for the two algorithms that performed the best which
 are the algorithms that select the least compatible skill. 
 The {\leastcompd}, selects  the user with the minimum distance, while the {\leastcomp}, the user who is the most compatible with the existing team.
We also experiment with a baseline {\random} that selects a compatible user at random.

In Figure \ref{fig:sol_perc}(a), we report the percentage of times that
each algorithm was able to  find  a compatible team for $k$ = 5.
The last bar ({\sc MAX}) shows the percentage of tasks that contain compatible skills.
This is a rough upper bound on the number of compatible teams, since it is
based on compatible skills and not the compatibility of users. 
The two algorithms perform equally well indicating that optimizing for compatibility makes very little difference.
Figure \ref{fig:sol_perc}(b) shows the average cost of the teams produced and
indicates that {\leastcompd} is the best choice.
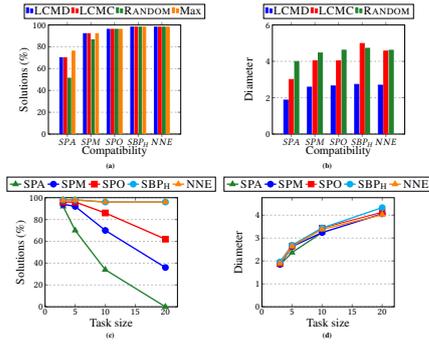
\begin{figure}[t!]
\centering
\vspace{-0.1in}
\resizebox{0.7\columnwidth}{!}{
\begin{tabular}{cc}
\subfloat[]{
\begin{tikzpicture}
\begin{axis}[
	ybar,
	enlargelimits=0,
	legend style={legend columns=4, font=\Huge, anchor=north,at={(0.5,1.19)}},
	xlabel={Compatibility},
    ylabel={Solutions (\%)},
	y label style={font=\Huge},
    x label style={font=\Huge},    
    xticklabel style = {font=\huge},
	yticklabel style = {font=\huge},  
   symbolic x coords={$SPA$, $SPM$, $SPO$, $SBP_H$, $NNE$},
    xtick=data,
    enlarge x limits=0.2,
    bar width = 4pt,
    ytick distance = 20,
    ymax=100,
    ymin=0,
    ymajorgrids=true,
	grid style=dashed,
]	
\addplot[color=blue,fill=blue]coordinates{($SPA$,70)($SPM$,92)($SPO$,96)($SBP_H$,98)($NNE$,98)};
\addplot[color=red,fill=red]coordinates{($SPA$,70)($SPM$,92)($SPO$,96)($SBP_H$,98)($NNE$,98)};
\addplot[color=darkgreen,fill=darkgreen]coordinates{($SPA$,51.2)($SPM$,86.4)($SPO$,96)($SBP_H$,98)($NNE$,98)};
\addplot[color=orange,fill=orange]coordinates{($SPA$,76)($SPM$,92)($SPO$,96)($SBP_H$,98)($NNE$,98)};
\legend{{\leastcompd},{\leastcomp},{\random}, Max}
\end{axis}
\end{tikzpicture}}
&
\subfloat[]{
\begin{tikzpicture}
\begin{axis}[
	ybar,
	enlargelimits=0,
	legend style={legend columns=3, font=\Huge, anchor=north,at={(0.5,1.19)}},
	xlabel={Compatibility},
    ylabel={Diameter},
	y label style={font=\Huge},
    x label style={font=\Huge},    
    xticklabel style = {font=\huge},
	yticklabel style = {font=\huge},  
   symbolic x coords={$SPA$, $SPM$, $SPO$, $SBP_H$, $NNE$},
    xtick=data,
     ymin=0,
     ymax=6,
    enlarge x limits=0.2,
    bar width = 6pt,
    ymajorgrids=true,
	grid style=dashed,
]	

\addplot[color=blue,fill=blue]coordinates{($SPA$,1.867)($SPM$,2.579)($SPO$,2.644)($SBP_H$,2.729)($NNE$,2.688)};
\addplot[color=red,fill=red]coordinates{($SPA$,3)($SPM$,4.026)($SPO$,4.022)($SBP_H$,4.979)($NNE$,4.562)};
\addplot[color=darkgreen,fill=darkgreen]coordinates{($SPA$,3.987)($SPM$,4.458)($SPO$,4.61)($SBP_H$,4.717)($NNE$,4.608)};
\legend{{\leastcompd},{\leastcomp},{\random}}
\end{axis}
\end{tikzpicture}}
\\
\subfloat[]{
\begin{tikzpicture}
\begin{axis}[
	xlabel={Task size},
	ylabel={Solutions (\%)},
	y label style={font=\Huge},
    x label style={font=\Huge},   
    xticklabel style = {font=\huge},
	yticklabel style = {font=\huge}, 
	xmin= 0,
	ymin = 0,
	ytick distance = 20,
	ymax=100,
	legend style={legend columns=6, font=\Huge, anchor=north,at={(0.5,1.19)}},
	ymajorgrids=true,
	grid style=dashed,
]
	\addplot[color=darkgreen, mark=triangle*,line width=2pt,mark size=4pt]coordinates{(3,92)(5,70)(10,34)(20,0)};
	\addplot[color=blue, mark=*,line width=2pt,mark size=4pt]coordinates{(3,94)(5,92)(10,70)(20,36)};
	\addplot[color=red, mark=square*,line width=2pt,mark size=4pt]coordinates{(3,96)(5,96)(10,86)(20,62)};
	\addplot[color=cyan, mark=*,line width=2pt,mark size=4pt]coordinates{(3,98)(5,98)(10,96)(20,96)};
	\addplot[color=orange, mark=triangle*,line width=2pt,mark size=4pt]coordinates{(3,98)(5,98)(10,96)(20,96)};
\legend{{\nnp},{\mpp},{\opp},{\sbh},{\nne}}
\end{axis}
\end{tikzpicture}}
&
\subfloat[]{
\begin{tikzpicture}
\begin{axis}[
	xlabel={Task size},
	ylabel={Diameter},
	y label style={font=\Huge},
    x label style={font=\Huge},   
    xticklabel style = {font=\huge},
	yticklabel style = {font=\huge},  
	xmin= 0,
	ymin = 0,
	legend style={legend columns=6, font=\Huge, anchor=north,at={(0.5,1.19)}},
	ymajorgrids=true,
	grid style=dashed,
]
	\addplot[color=darkgreen, mark=triangle*,line width=2pt,mark size=4pt]coordinates{(3,1.83)(5,2.37)(10,3.24)};
	\addplot[color=blue, mark=*,line width=2pt,mark size=4pt]coordinates{(3,1.85)(5,2.63)(10,3.24)(20,4.06)};
	\addplot[color=red, mark=square*,line width=1.5pt,mark size=4pt]coordinates{(3,1.9)(5,2.65)(10,3.44)(20,4.13)};
	\addplot[color=cyan, mark=*,line width=2pt,mark size=4pt]coordinates{(3,1.96)(5,2.69)(10,3.44)(20,4.33)};
	\addplot[color=orange, mark=triangle*,line width=2pt,mark size=4pt]coordinates{(3,1.9)(5,2.65)(10,3.38)(20,4.02)};
\legend{{\nnp},{\mpp},{\opp},{\sbh},{\nne}}
\end{axis}
\end{tikzpicture}}
\end{tabular}}
\vspace*{-0.15in}
\caption{Team formation: comparison of algorithms ((a) and (b)), varying task size (c) and (d).}
\vspace*{-0.25in}
\label{fig:sol_perc}
\end{figure}

In Figures~\ref{fig:sol_perc}(c) and (d), we report results for teams of varying task sizes using  {\leastcompd}.
As expected, more skills means that more people need to co-operate to complete the task, making it harder to form a compatible team, and more likely to include a distant node. The number of solutions drops steeply for more strict compatibility relations, while it remains more or less constant for {\nne} and {\sbh}.

Finally, we compare our approach with previous work on team formation.
Since there is no previous work on team formation on signed network,  we create two unsigned {\epin} networks by (1) ignoring the sign of the edges and (2) deleting the negative edges.
We run a team formation algorithm \cite{Lappas09} on each of these two networks using the same tasks with $k$ = 5 skills as in the previous experiments. In Table \ref{table:baseline}, we report the percentage of
the returned teams that satisfy compatibility for the different compatibility relations. As shown, most of the teams returned are incompatible.


\vspace*{-0.1in}
\section{Related Work}
\vspace*{-0.05in}
To the best of our knowledge, our work is the first to address team formation in signed networks. 

\noindent{\textit{Team Formation.}}
%
Lappas et al. \cite{Lappas09} were the first to formally define the problem of finding a team of experts  using the network structure to quantify the quality of the team as a whole.
%
There is considerable amount of work extending their model,  (e.g., \cite{Kargar11,5590958,Anagnostopoulos:2012:OTF:2187836.2187950}),
but none of these works considers a signed network.

\noindent{\textit{Signed Networks. }}
There is a fair amount of work on signed networks ~\cite{signed-survey}.
A problem somehow related to our work is that of link and sign prediction ~\cite{social-status,Chiang2011}.
However,  we differentiate, since we are not interested in predicting future links, but rather in evaluating the compatibility between any two individuals in the network.
%
There is also work on detecting communities in signed networks (e.g., see~\cite{Yang:2007:CMS:1313049.1313219}). The notion of the team is somehow related to that of the community, but the objective of team formation is different.

\noindent{\textit{Structural Balance.}}
There is a rich literature in psychology on positive and negative relations among groups of people using structural balance theory, e.g.,~\cite{cartwright1956structural,davis1963structural}.
Structural balance has been used e.g., for identifying clusters~\cite{DBLP:conf/otm/DrummondFFL13} and polarization in networks \cite{DBLP:journals/tcss/LeeCL16}, finding communities \cite{DBLP:journals/socnet/DengAEWT16}.

\section{Conclusions}
\vspace*{-0.05in}
In this paper, we introduced the novel problem of team formation in a signed network.
The problem poses the challenge of defining node compatibility in a signed network.
To this end, we provided a principled framework by utilizing the theory of structural balance.
%
In the future, we plan to investigate different ways to combine compatibility and communication cost and to exploit compatibility for other tasks, such as link prediction or clustering.
\vspace*{-0.07in}
\section*{Acknowledgements}
The research work was supported by the Hellenic Foundation for Research and Innovation (H.F.R.I.) under the “First Call for H.F.R.I. Research Projects to support Faculty members and Researchers and the procurement of high-cost research equipment grant” (Project Number: 1873).
\bibliographystyle{ACM-Reference-Format}
\bibliography{negative-teams-short}

\end{document}